\newtheorem{theorem}{Theorem}[section]
\newtheorem{result}{Result}[section]
\newtheorem{proposition}{Proposition}[section]
\numberwithin{equation}{section}
\begin{document}
\begin{center}
\textbf{\Large{Generalized Symmetric Divergence Measures and Metric Spaces}}
\end{center}

\bigskip
\begin{center}
\textbf{\large{G. A. T. F. da Costa }}\\
and\\
\smallskip
\textbf{\large{Inder Jeet Taneja}}\\
\smallskip
Departamento de Matem\'{a}tica\\
Universidade Federal de Santa Catarina\\
88.040-900 Florian\'{o}polis, SC, Brazil.\\
\end{center}

\begin{abstract}
Recently, Taneja \cite{tan3} studied two one parameter generalizations of \textit{J-divergence, Jensen-Shannon divergence } and \textit{Arithmetic-Geometric divergence}. These two generalizations in particular contain measures like: \textit{Hellinger discrimination}, \textit{symmetric chi-square divergence}, and \textit{triangular discrimination}. These measures are well known in the literature of Statistics and Information theory. In this paper our aim is to prove metric space properties for square root of these two symmetric generalized divergence measures.
\end{abstract}

\bigskip
\textbf{Key words:} \textit{J-divergence; Jensen-Shannon divergence; Arithmetic-Geometric divergence; Metric Space.}

\bigskip
\textbf{AMS Classification:} 94A17; 62B10.

\section{Introduction}

Let
\[
\Gamma _n = \left\{ {P = (p_1 ,p_2 ,...,p_n )\left| {p_i > 0,\sum\limits_{i
= 1}^n {p_i = 1} } \right.} \right\},
\quad
n \ge 2,
\]

\noindent
be the set of all complete finite discrete probability distributions. For all $P,Q \in \Gamma _n $, let us consider two generalized symmetric divergence measures. These measures are well known in the literature on information theory and statistics.

\bigskip
Let us consider the measure
\begin{equation}
\label{eq1}
\xi _s (P\vert \vert Q) =
\begin{cases}
 {L_s (P\vert \vert Q) = \left[ {s(s - 1)} \right]^{ - 1}\left[
{\sum\limits_{i = 1}^n {\left( {\frac{p_i^s + q_i^s }{2}} \right)\left(
{\frac{p_i + q_i }{2}} \right)} ^{1 - s} - 1} \right],} & {s \ne 0,1} \\
 {I(P\vert \vert Q) = \frac{1}{2}\left[ {\sum\limits_{i = 1}^n {p_i \ln
\left( {\frac{2p_i }{p_i + q_i }} \right) + \sum\limits_{i = 1}^n {q_i \ln
\left( {\frac{2q_i }{p_i + q_i }} \right)} } } \right],} & {s = 1} \\
 {T(P\vert \vert Q) = \sum\limits_{i = 1}^n {\left( {\frac{p_i + q_i }{2}}
\right)\ln \left( {\frac{p_i + q_i }{2\sqrt {p_i q_i } }} \right)} ,} & {s =
0} \\
\end{cases}
\end{equation}

\noindent
for all $P,Q \in \Gamma _n $

\bigskip
The measure (\ref{eq1}) was studied for the first time by Taneja \cite{tan2} and is called \textit{generalized symmetric} \textit{arithmetic and geometric mean divergence.} The measure (\ref{eq1}) admits the following particular cases:

\bigskip
\noindent
(i) $\xi _{ - 1} (P\vert \vert Q) = \frac{1}{4}\Delta (P\vert \vert Q)$.\\
(ii) $\xi _1 (P\vert \vert Q) = I(P\vert \vert Q)$.\\
(iii) $\xi _{1 / 2} (P\vert \vert Q) = 4\;d(P\vert \vert Q)$.\\
(iv) $\xi _0 (P\vert \vert Q) = T(P\vert \vert Q)$.\\
(v) $\xi _2 (P\vert \vert Q) = \frac{1}{16}\Psi (P\vert \vert Q)$.

\bigskip
\noindent where
\begin{align}
\Delta (P\vert \vert Q) & = \sum\limits_{i = 1}^n {\frac{(p_i - q_i )^2}{p_i +
q_i }} ,\notag\\
\intertext{and}
d(P\vert \vert Q) &= 1 - \sum\limits_{i = 1}^n {\left( {\frac{\sqrt {p_i } +
\sqrt {q_i } }{2}} \right)} \left( {\sqrt {\frac{p_i + q_i }{2}} } \right).\notag
\end{align}

\noindent
are the \textit{triangular discrimination} and \textit{d-divergence} respectively. The measures $I(P\vert \vert Q)$ and $T(P\vert \vert Q)$ are the well-known Jensen-Shannon divergence \cite{sib, bur} and Arithmetic-Geometric mean divergence \cite{tan2},  respectively.

\bigskip
Let us consider now the other measure
\begin{equation}
\label{eq2}
\zeta _s (P\vert \vert Q) = \begin{cases}
 {J_s (P\vert \vert Q) = \left[ {s(s - 1)} \right]^{ - 1}\left[
{\sum\limits_{i = 1}^n {\left( {p_i^s q_i^{1 - s} + p_i^{1 - s} q_i^s }
\right) - 2} } \right],} & {s \ne 0,1} \\
 {J(P\vert \vert Q) = \sum\limits_{i = 1}^n {\left( {p_i - q_i } \right)\ln
\left( {\frac{p_i }{q_i }} \right),} } & {s = 0,1} \\
\end{cases}
\end{equation}

\noindent
for all $P,Q \in \Gamma _n $

\bigskip
The measure (\ref{eq2}) can be seen in Burbea and Rao \cite{bur} and Taneja \cite{tan3}. The expression (\ref{eq2}) admits the following particular cases:

\bigskip
\noindent
(i) $\zeta _{ - 1} (P\vert \vert Q) = \zeta _2 (P\vert \vert Q) = \frac{1}{2}\Psi (P\vert \vert Q)$,\\
(ii) $\zeta _0 (P\vert \vert Q) = \zeta _1 (P\vert \vert Q) = J(P\vert \vert Q)$,\\
(iii) $\zeta _{1 / 2} (P\vert \vert Q) = 8\,h(P\vert \vert Q)$,
\bigskip

\noindent where
\begin{align}
\Psi (P\vert \vert Q) & = \chi ^2(P\vert \vert Q) + \chi ^2(Q\vert \vert P) =
\sum\limits_{i = 1}^n {\frac{(p_i - q_i )^2(p_i + q_i )}{p_i q_i }} ,\notag\\
\intertext{and}
h(P\vert \vert Q)&  = 1 - B(P\vert \vert Q) = \frac{1}{2}\sum\limits_{i = 1}^n
{(\sqrt {p_i } - \sqrt {q_i } )^2} ,
\end{align}

\noindent
are the \textit{symmetric }$\chi ^2$\textit{-- divergence} and \textit{Hellinger's discrimination} respectively. The measure $J(P\vert \vert Q)$ is the well-known J-divergence \cite{jef}. For detailed study of the measures (\ref{eq1}) and (\ref{eq2}) refer to Taneja \cite{tan2, tan3}.

\bigskip
The symmetric divergence measures (\ref{eq1}) and (\ref{eq2}) admit several particular cases. An inequality among these measures \cite{tan2} is given by
\begin{equation}
\label{eq3}
\frac{1}{4}\Delta (P\vert \vert Q) \le I(P\vert \vert Q) \le h(P\vert \vert
Q) \le 4\,d(P\vert \vert Q)
 \le \frac{1}{8}J(P\vert \vert Q) \le T(P\vert \vert Q) \le \frac{1}{16}\Psi
(P\vert \vert Q).
\end{equation}

An improvement over the inequalities given in (\ref{eq3}) can be seen in Taneja \cite{tan4, tan5}.

\bigskip
In this paper our aim is to prove metric space properties of the square root of the measures (\ref{eq1}) and (\ref{eq2}).

\section{Generalized Divergence Measures and Metric Spaces}

In this section we shall prove the metric space property of the square root of the measures given in (\ref{eq1}) and (\ref{eq2}).

\subsection{JS and AG -- Divergences of Type s}

Let the function $\zeta _s (p,q):{\rm R}^ + \times {\rm R}^ + \to {\rm R}^ + $be defined as
\begin{equation}
\label{eq4}
\zeta _s (p,q) = \begin{cases}
 {L_s (p,q) = \frac{1}{s(s - 1)}\left[ {\left( {\frac{p^s + q^s}{2}}
\right)\left( {\frac{p + q}{2}} \right)^{1 - s} - \left( {\frac{p + q}{2}}
\right)} \right],} & {s \ne 0,1} \\
 {I(p,q) = \frac{p}{2}\ln \left( {\frac{2p}{p + q}} \right) + \frac{q}{2}\ln
\left( {\frac{2q}{p + q}} \right),} & {s = 0} \\
 {T(p,q) = \left( {\frac{p + q}{2}} \right)\ln \left( {\frac{p + q}{2\sqrt
{pq} }} \right),} & {s = 1} \\
\end{cases}
\end{equation}

\noindent
In view of (\ref{eq4}), we can write
\begin{equation}
\label{eq5}
\zeta _s (P\vert \vert Q) = \sum\limits_{i = 1}^n {\zeta _s (p_i ,q_i )} ,
\end{equation}

\noindent
for all $P,Q \in \Gamma _n $

\begin{theorem} The measure given by $\sqrt {\zeta _s (p,q)} $ is a metric over ${\rm R}^ + $.
\end{theorem}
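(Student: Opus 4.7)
The plan is to verify, in turn, the four defining properties of a metric for the map $(p,q) \mapsto d_s(p,q) := \sqrt{\zeta_s(p,q)}$ on ${\rm R}^+ \times {\rm R}^+$.

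First I would dispatch the three easy axioms. Non-negativity $\zeta_s(p,q) \ge 0$ and the identity $d_s(p,q)=0 \Longleftrightarrow p=q$ follow because each branch in (\ref{eq4}) can be written as a Csisz\'ar-type divergence with a strictly convex generator that vanishes only at $p=q$: the prefactor $[s(s-1)]^{-1}$ is chosen exactly so that the bracketed expression has the correct sign, and the limit cases $s=0,1$ recover the known Jensen--Shannon-type $I(p,q)$ and arithmetic--geometric $T(p,q)$, both of which are classically nonnegative and vanish only on the diagonal. Symmetry $\zeta_s(p,q) = \zeta_s(q,p)$ is manifest from each formula in (\ref{eq4}), since $p$ and $q$ enter symmetrically throughout.

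The substance of the proof is the triangle inequality
\[
\sqrt{\zeta_s(p,r)} \le \sqrt{\zeta_s(p,q)} + \sqrt{\zeta_s(q,r)}, \qquad p,q,r > 0.
\]
The approach I would adopt is to exhibit an isometric embedding of $({\rm R}^+, d_s)$ into a Hilbert space $(H_s, \|\cdot\|_{H_s})$: that is, a map $\Phi_s : {\rm R}^+ \to H_s$ with
\[
\zeta_s(p,q) = \|\Phi_s(p) - \Phi_s(q)\|_{H_s}^2,
\]
so that the triangle inequality for $d_s$ reduces to Minkowski's inequality in $H_s$. The distinguished cases enumerated after (\ref{eq1})---namely $\Delta$, $d$, $I$, $T$, $\Psi$---suggest concrete building blocks for $\Phi_s$ (differences of powers, of square roots, and of logarithms). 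An alternative is a Kafka--\"Osterreicher--Vincze-type criterion: if $f$ is the convex generator associated to $\zeta_s$ via the $f$-divergence formalism, then $\sqrt{D_f}$ is a metric as soon as $f$ satisfies a standard concavity-type condition, which one would then verify parameter by parameter.

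The main obstacle is precisely this triangle inequality, and in particular constructing the embedding (or verifying the concavity condition) uniformly in $s$. Two features make this delicate: the coexistence of the exponents $s$ and $1-s$ in $L_s$, which forces the generator to be a combination of power functions whose qualitative shape changes as $s$ crosses $0$, $1/2$ and $1$; and the need to treat $s=0,1$ separately, where the power functions degenerate into logarithms. I would plan to split the argument by the sign of $s(s-1)$, normalize using the homogeneity $\zeta_s(tp,tq) = t\,\zeta_s(p,q)$ to reduce to a one-variable inequality, and recover the logarithmic endpoints by letting $s \to 0$ or $s \to 1$ in the identity established for generic $s$.
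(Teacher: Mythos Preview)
Your treatment of non-negativity, identity of indiscernibles, and symmetry is adequate, but the triangle inequality---the only substantive part---is not actually proved in your proposal. You outline two strategies (an isometric Hilbert embedding $\Phi_s$, or a Kafka--\"Osterreicher--Vincze-type concavity check on the generator) and then explicitly identify ``constructing the embedding \ldots\ uniformly in $s$'' as the main obstacle, without resolving it. That is a genuine gap. The Hilbert-embedding route is moreover strictly stronger than what is being asked: an isometric embedding into Hilbert space is equivalent to $\zeta_s$ being of negative type, which does not follow from (and is not implied by) $\sqrt{\zeta_s}$ being a metric, so you would be committing to prove more than the theorem states with no evidence that this holds for every $s$. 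The allusion to ``building blocks'' suggested by the special values $\Delta,d,I,T,\Psi$ is not a construction, and the KÖV alternative is left at the level of ``one would then verify parameter by parameter''.

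The paper takes an entirely different, elementary route that avoids both embeddings and abstract $f$-divergence criteria. Fixing $p\neq q$, it studies
\[
K_{pq}(r)=\sqrt{L_s(p,r)}+\sqrt{L_s(r,q)}
\]
as a function of $r>0$, exploits the homogeneity you noted to pass to the single variable $t=p/r$ (so that $q/r=\beta t$ with $\beta=q/p$), and computes $K_{pq}'$ explicitly. A sign analysis of the resulting one-variable function $h_{L_s}(t)$ shows that $K_{pq}'$ changes sign exactly once, so the global minimum of $K_{pq}$ is attained at an endpoint $r=p$ or $r=q$, where $K_{pq}$ equals $\sqrt{L_s(p,q)}$; this yields the triangle inequality. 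The case $s=1$ (the $T$-divergence) is handled by the same scheme with its own explicit $h_T$, and $s=0$ is deferred to Endres--Schindelin. If you wish to pursue your line instead, you must either produce $\Phi_s$ concretely for generic $s$ or state the precise $f$-divergence criterion and verify it; absent that, the paper's direct optimization argument is the one that actually closes the gap.
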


\begin{proof}  (i) Initially we shall prove the result for $s \ne 0,1$. It is sufficient to show the triangle inequality:
\begin{equation}
\label{eq6}
\sqrt {L_s (p,q)} \le \sqrt {L_s (p,r)} + \sqrt {L_s (r,q)} ,
\quad
\forall p,q,r \in {\rm R}^ + ,
\quad
s \ne 0,1
\end{equation}

\noindent
Let us write
\[
K_{pq} (r) = \sqrt {L_s (p,r)} + \sqrt {L_s (r,q)} .
\]

\noindent
Now we shall prove that $K_{pq} $ has only one minimum at $r = p = q$. The derivative o f$K_{pq} $ with respect to $r$ is
\[
{K}'_{pq} (r) = \frac{{L}'_s (p,r)}{2\sqrt {L_s (p,r)} } + \frac{{L}'_s
(r,q)}{2\sqrt {L_s (r,q)} },
\]

\noindent where
\begin{align}
{L}'_s (p,r) & = \frac{d}{dr}L_s (p,r)\notag\\
& = \frac{(1 - s)r^{ - s}(p + r)^s + (p^{1 - s} + r^{1 - s})s(p + r)^{s - 1}
- 2^s}{s(s - 1)2^{s + 1}}\notag\\
& \mathop = \limits_{\frac{p}{r} = t} \frac{(1 - s)(1 + t)^s + s(t^{1 - s} +
1)(t + 1)^{s - 1} - 2^s}{s(s - 1)2^{s + 1}}. \notag
\end{align}

\noindent
Also, we can write
\[
\sqrt {L_s (p,r)} \mathop = \limits_{\frac{p}{r} = t} \sqrt r \sqrt {L_s (t,1)} ,
\]

\noindent
Multiply $K'_{pq} $ by $2\sqrt r $ and define the function $h\left( t \right)$ by setting
\[
\frac{2\sqrt r {L}'_s (p,r)}{\sqrt {L_s (p,r)} }\mathop =
\limits_{\frac{p}{r} = t} h_{L_s } (t) = \frac{n_{L_s } (t)}{d_{L_s } (t)},
\]

\noindent where
\[
n_{L_s } (t) = \left. {\frac{d}{dr}L_s (p,r)} \right|_{\frac{p}{r} = t}
\]

\noindent and
\[
d_{L_s } (t) = \sqrt {L_s (t,1)} .
\]

\noindent
Thus the sign of $h_{L_s } (t)$ depends only on the sign of $n_{L_s } (t)$. We have
\[
{n}'_{L_s } (t) = - \frac{(1 + t)^{s - 2}}{2^{s + 1}}\left[ {(1 + t)(1 + t^{
- s}) + (1 + t^{1 - s})} \right].
\]

\noindent
This give
\bigskip
${n}'_{L_s } (t) < 0,\;\forall t > 0\;\mbox{and}\;\forall s  \Rightarrow n_{L_s } (t)$ decreases monotonically in $\left( {0, + \infty }
\right)$.

 \noindent
As $h_{L_s } (1) = 0$,  $n_{L_s } (t)$ changes the sign at $t = 1$, therefore $h_{L_s } (t)$ changes the sign at $t = 1$. This gives
\[
h_{L_s } (t)\begin{cases}
 { > 0,} & {t < 1} \\
 { < 0,} & {t > 1} \\
\end{cases}
\]

\noindent for any $s$.

\bigskip \noindent
As $\frac{p}{r} = t$, then $\frac{q}{r} = \frac{q}{p}\frac{p}{r} = \beta t$, where $\beta = \frac{q}{p}$. Therefore,
\[
2\sqrt r \frac{dK_{L_s } }{dr} = h_{L_s } (t) + h_{L_s } (t\beta ).
\]

\noindent
Now, suppose $\beta > 1\mbox{ }\left( {q > p} \right)$, this give:
\begin{itemize}
\item for $t < \frac{1}{\beta }$: $h_{L_s } (t)$ and $h_{L_s } (\beta t)$ have the same sign +
\item for $t > 1$: $h_{L_s } (t)$ and $h_{L_s } (\beta t)$ have the same sign --
\item for $t \in \left( {\frac{1}{\beta },1} \right) \Rightarrow t\beta > 1 \Rightarrow h_{L_s } (t\beta ) < 0$
\item for $t \in \left( {\frac{1}{\beta },1} \right) \Rightarrow h_{L_s } (t) > 0$
\end{itemize}

\bigskip \noindent
Finally, we have for $t \in \left( {\frac{1}{\beta },1} \right)$, $h_{L_s } (t) > 0$ e $h_{L_s } (t\beta ) < 0$. Since $\left| {h_{L_s } \left( t \right)} \right| > \left| {h_{L_s } \left( {t\beta } \right)} \right|$
($h_{L_s } $is monotonically decreasing), then $h_{L_s } (t) + h_{L_s } (\beta t) > 0$. For $t > 1$, $h_{L_s } (t) < 0$ e $h_{L_s } (t\beta ) < 0$ and $h_{L_s } (t) + h_{L_s } (\beta t) < 0$.

\bigskip \noindent
Therefore, $\frac{dK_{L_s } }{dr}$ indeed changes from positive to negative sign at $t = 1 \quad \left( {r = p} \right)$ so that there is a minimum at $t = 1$. Now, we shall show that this happens only once.

\bigskip \noindent
Since $h_{L_s } $ is monotonically decreasing this implies that ${h}'_{L_s } < 0$ and we know that the function $h_{L_s } $changes the sign only once.  This gives
\[
\frac{d}{dt}\left( {h_{L_s } (t) + h_{L_s } (t\beta )} \right) = {h}'_{L_s }
(t) + {h}'_{L_s } (t\beta ) < 0,
\]

\noindent
The case $q < p$ can be investigated in a similar fashion. Symmetry of $L_s $ allows us to take $t = \frac{q}{r}$ and $\frac{p}{r} = \beta t$ with $\beta = \frac{p}{q} > 1$. From this we conclude that there is a minimum at $r = q$.

\bigskip \noindent
Repeating the same process by substituting $t: = \frac{q}{r}$ and $\frac{p}{r} = \beta t$ with $\beta = \frac{p}{q}$ we conclude that the function ${K}'_{pq} $ also changes the sign at $r = q$. This proves the result (\ref{eq6}) for $s \ne 0,1.$

\bigskip
(ii) Now we shall prove the result for $s = 1$. We have to show that
\[
\sqrt {T(p,q)} \le \sqrt {T(p,r)} + \sqrt {T(r,q)} , \quad \forall p,q,r \in {\rm R}^ + .
\]

\noindent
Let us write
\[
T_{pq} (r) = \sqrt {T(p,r)} + \sqrt {T(r,q)} ,
\]

\noindent then obviously,
\[
{T}'_{pq} (r) = \frac{{T}'(p,r)}{2\sqrt {T(p,r)} } + \frac{{T}'(r,q)}{2\sqrt
{T(r,q)} }.
\]

\noindent
Now, we have
\begin{align}
{T}'(p,r) & = \frac{d}{dr}T(p,r)\notag\\
& = \frac{1}{2}\ln \left( {\frac{p + r}{2\sqrt {pr} }} \right) + \left(
{\frac{p + r}{2}} \right)\frac{d}{dr}\left[ {\ln \left( {\frac{p + r}{2\sqrt
{pr} }} \right)} \right]\notag\\
& = \frac{1}{2}\ln \left( {\frac{p + r}{2\sqrt {pr} }} \right) + \frac{r - p}{4r}. \notag
\end{align}

\noindent
This give
\[
\frac{{T}'(p,r)}{2\sqrt {T(p,r)} } = \frac{h_T \left( t \right)}{\sqrt {32r}
}\left| {_{t = \frac{p}{r}} } \right.
\]

\noindent where
\[
h_T (t): = \frac{2\ln \left( {\frac{t + 1}{2\sqrt t }} \right) + (1 -
t)}{\sqrt {(t + 1)\ln \left( {\frac{(1 + t)}{2\sqrt t }} \right)} }.
\]

\noindent
Let us take now $\frac{q}{r} = \beta t$ where $\beta = \frac{q}{p}$, then we can write
\[
\left. {\sqrt {32r} \,{T}'_{pq} (r)} \right|_{t = \frac{p}{r}} = h_T (t) +
h_T (\beta t).
\]

\noindent
Let us study now the function $h_T (t)$. Call $n_T (x)$ and $d_T \left( t \right)$ the functions in the numerator and denominator of $h_T (t)$, respectively. Since we know that $d_T (t) > 0,\;\forall x > 0$, then the sign of $h_T (t)$ is determined by $n_T (t)$.

\bigskip \noindent
Now,
\[
{n}'_T (t) = 2\frac{\sqrt t }{\left( {t + 1} \right)}\frac{d}{dt}\left(
{\frac{t + 1}{\sqrt t }} \right) - 1
 = - \frac{\left( {1 + t^2} \right)}{t(t + 1)}.
\]

\bigskip \noindent
From this we conclude that $n_T (t)$ is decreasing $\forall t > 0$ and ${n}'_T (t) \ne 0$, $\forall t \in {\rm R}$. Since $n_T (1) = 0$ then $n_T (t)$ changes the sign at $t = 1$. This gives that $h_T (t)$ changes the sign at $t = 1$.

\bigskip \noindent
Thus we conclude that $h_T (t)$ is decreasing function of $x$ and
\[
h_T (t)\begin{cases}
 { > 0,} & {t < 1} \\
 { < 0,} & {t > 1} \\
\end{cases}
\]

\noindent
Set $\beta > 1$. In this case,
\begin{itemize}
\item for $t < \frac{1}{\beta }$: $h_T (t)$ and $h_T (\beta t)$ have the same sign +
\item for $t > 1$: $h_T (t)$ and $h_T (\beta t)$ have the same sign --
\item for $t \in \left( {\frac{1}{\beta },1} \right) \Rightarrow t\beta > 1 \Rightarrow h_T (t\beta ) < 0$
\item for $t \in \left( {\frac{1}{\beta },1} \right) \Rightarrow h_T (t) > 0$
\end{itemize}

\bigskip \noindent
Finally, for $t \in \left( {\frac{1}{\beta },1} \right)$, $h_T (t) > 0$ e $h_T (t\beta ) < 0$. Thus we observe that the sign of $h_T (t) + h_T (\beta t)$ may change in $\left( {\textstyle{1 \over \beta },1} \right)$. Now, we shall show that this happens only once.

\bigskip \noindent
Since ${h}'_T $ is monotonically decreasing this implies that ${h}'_T < 0$ and we know that the function $h_T $changes the sign only once. This gives
\[
\frac{d}{dt}\left( {h_T (t) + h_T (t\beta )} \right) = {h}'_T (t) + {h}'_T (t\beta ) < 0,
\]

\noindent
Repeating the same process by substituting $t: = \frac{q}{r}$ and
$\frac{p}{r} = \beta t$ with $\beta = \frac{p}{q}$ we conclude that the
function ${T}'_{pq} $ also changes the sign at $r = q$.

\bigskip
(iii) For $s = 0$ the result is already given in Endres and Schindelin \cite{ens}.
\end{proof}

\subsection{J -- Divergences of Type s}

Let the function $\xi _s (p,q):{\rm R}^ + \times {\rm R}^ + \to {\rm R}^ + $be defined as
\begin{equation}
\label{eq7}
\xi _s (p,q) = \begin{cases}
 {J_s (p,q) = \left[ {s(s - 1)} \right]^{ - 1}\left[ {p^sq^{1 - s} + p^{1 -
s}q^s - (p + q)} \right],} & {s \ne 0,1} \\
 {J(p,q) = (p - q)\ln \left( {\frac{p}{q}} \right)} & {s = 0,1} \\
\end{cases}
\end{equation}

In view of (\ref{eq7}), we can write
\begin{equation}
\label{eq8}
\xi _s (P\vert \vert Q) = \sum\limits_{i = 1}^n {\xi _s (p_i ,q_i )} ,
\end{equation}

\noindent
for all $P,Q \in \Gamma _n $

\begin{theorem} The measure given by $\sqrt {\xi _s (p,q)} $ is a metric space over ${\rm R}^ + $.
\end{theorem}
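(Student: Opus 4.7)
The plan is to mirror the strategy used in the proof of Theorem 2.1. Nonnegativity of $\xi_s$ (for $s\neq 0,1$) follows from convexity of the generator, symmetry is manifest from the defining formula, and $\xi_s(p,q) = 0 \Leftrightarrow p = q$ is immediate. Hence the real content is the triangle inequality
\[
\sqrt{\xi_s(p,q)} \le \sqrt{\xi_s(p,r)} + \sqrt{\xi_s(r,q)}, \qquad \forall\, p,q,r \in {\rm R}^+.
\]

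I first treat $s\neq 0,1$. Fix $p,q$ and set $K_{pq}(r) := \sqrt{J_s(p,r)} + \sqrt{J_s(r,q)}$. Using the homogeneity $J_s(p,r) = r\,J_s(t,1)$ with $t = p/r$, one has $\sqrt{J_s(p,r)} = \sqrt{r}\,\sqrt{J_s(t,1)}$. Differentiating and multiplying by $2\sqrt{r}$ gives
\[
2\sqrt{r}\,K_{pq}'(r) = h_{J_s}(t) + h_{J_s}(\beta t), \qquad \beta := q/p,
\]
where $h_{J_s}(t) := n_{J_s}(t)/\sqrt{J_s(t,1)}$ and
\[
n_{J_s}(t) = \frac{1}{s(s-1)}\bigl[(1-s)\,t^s + s\,t^{1-s} - 1\bigr].
\]
A short computation yields
\[
n_{J_s}'(t) = -\bigl(t^{s-1} + t^{-s}\bigr) < 0, \qquad \forall\,t > 0,\ \forall\,s \neq 0,1,
\]
together with $n_{J_s}(1) = 0$. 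Hence $n_{J_s}$ (and therefore $h_{J_s}$) is strictly decreasing on ${\rm R}^+$, positive on $(0,1)$ and negative on $(1,\infty)$.

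With this in hand, the case analysis on $\beta > 1$ carried out in Theorem 2.1 transfers verbatim: for $t < 1/\beta$ both summands $h_{J_s}(t)$ and $h_{J_s}(\beta t)$ are positive; for $t > 1$ both are negative; on $t \in (1/\beta,1)$ the positive summand dominates, and the estimate $h_{J_s}'(t) + \beta h_{J_s}'(\beta t) < 0$ ensures that $h_{J_s}(t) + h_{J_s}(\beta t)$ can change sign only once. Consequently $K_{pq}$ is minimized at $r = p$, at which value $K_{pq}(p) = \sqrt{J_s(p,q)}$, yielding the triangle inequality. The symmetric choice $t := q/r$, $\beta := p/q$ produces the analogous minimum at $r = q$. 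The remaining case $s = 0,1$ is treated identically with $n_J(t) = 1 - t - \ln t$, which satisfies $n_J'(t) = -1 - 1/t < 0$ on ${\rm R}^+$ and $n_J(1) = 0$, so the sign/monotonicity chain of the previous paragraph carries over with only cosmetic changes.

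The main obstacle, exactly as in Theorem 2.1, is the sign analysis on $(1/\beta,1)$: the sum $h_{J_s}(t) + h_{J_s}(\beta t)$ is a difference of quantities of opposite signs, and one must argue that the positive term dominates. The decisive input is the strict monotonic decrease of $n_{J_s}$ established uniformly in $s$ above, together with the fact that $J_s(p,r)$ is smooth away from the diagonal $r = p$, so that on that region $\operatorname{sign} K_{pq}'(r)$ is controlled entirely by $h_{J_s}(t) + h_{J_s}(\beta t)$.
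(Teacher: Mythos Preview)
Your argument is essentially the paper's own: the same homogeneity substitution $t=p/r$, the same auxiliary function $h_{J_s}=n_{J_s}/\sqrt{J_s(t,1)}$, the same key computation $n_{J_s}'(t)=-(t^{s-1}+t^{-s})<0$, and the same $\beta$-case analysis to locate the minimum of $K_{pq}$. The only substantive deviation is at $s=0,1$: you repeat the direct sign analysis using $n_J(t)=1-t-\ln t$, whereas the paper dispatches this case in one line by continuity of $\xi_s(p,q)$ in the parameter $s$.
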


\begin{proof} (i) Initially we shall prove the result for $s \ne 0,1$. It is sufficient to show the triangle inequality:
\begin{equation}
\label{eq9}
\sqrt {J_s (p,q)} \le \sqrt {J_s (p,r)} + \sqrt {J_s (r,q)} ,
\quad
\forall p,q,r \in {\rm R}^ + .
\end{equation}

\bigskip \noindent
Let us write
\[
F_{pq} (r) = \sqrt {J_s (p,r)} + \sqrt {J_s (r,q)} ,\;p \ne q,
\]

\noindent then obviously,
\[
{F}'_{pq} (r) = \frac{{J}'_s (p,r)}{2\sqrt {J_s (p,r)} } + \frac{{J}'_s
(r,q)}{2\sqrt {J_s (r,q)} }.
\]

\bigskip \noindent
Now, we have
\begin{align}
{J}'_s (p,r) & = \frac{d}{dr}J_s (p,r)\notag\\
& = \frac{(1 - s)p^sr^{ - s} + sp^{1 - s}r^{s - 1} - 1}{s(s - 1)}\notag\\
& \mathop = \limits_{\frac{p}{r} = t} \frac{t^sr + t^{1 - s}r - tr - r}{s(s - 1)}\notag
\end{align}

\bigskip \noindent
Also, we can write
\[
\sqrt {J_s (p,r)} \mathop = \limits_{p = rt} \sqrt r \sqrt {J_s (t,1)} ,
\]

\noindent
Let us write
\[
\sqrt r \frac{{J}'_s (p,r)}{\sqrt {J_s (t,1)} }\mathop =
\limits^{\frac{p}{r} = t} h_{J_s } (t) = \frac{n_{J_s } (t)}{d_{J_s } (t)},
\]

\noindent where
\[
n_{J_s } (t) = \left. {\frac{d}{dr}J_s (p,r)} \right|_{\frac{p}{r} = t}
\]

\noindent and
\[
d_{J_s } (t) = \sqrt {J_s (t,1)} .
\]

\noindent
Thus the sign of $h_{J_s } (t)$ depends on the sign of $n_{J_s } (t)$.
\[
{n}'_{J_s } (t) = - t^{s - 1} - t^{ - s}.
\]

\noindent
Thus
${n}'_{J_s } (t) < 0,\;\forall t > 0\;\mbox{and}\;\forall s \in {\rm R} - \{0,1\}  \Rightarrow n_{J_s } (t)$is decreasing $\forall t > 0$

\bigskip \noindent
As $h_{J_s } (1) = 0$, $n_{J_s } (t)$ changes the sign at $t = 1$ and therefore $h_{J_s } (t)$ changes the sign at $t = 1$. Thus for any $s$, we have
\[
h_{J_s } (t)\begin{cases}
 { > 0,} & {t < 1} \\
 { < 0,} & {t > 1} \\
\end{cases}
\]

\noindent
As $\frac{p}{r} = t$, then $\frac{q}{r} = \frac{q}{p}\frac{p}{r} = \beta t$, where $\beta = \frac{q}{p}$. Therefore,
\[
\sqrt r \frac{dJ_s }{dr} = h_{J_s } (t) + h_{J_s } (t\beta ).
\]

\noindent
Now,
\begin{itemize}
\item for $t < \frac{1}{\beta }$: $h_{J_s } (t)$ and $h_{J_s } (\beta t)$ have the same sign +
\item for $t > 1$: $h_{J_s } (t)$ and $h_{J_s } (\beta t)$ have the same sign --
\item for $t \in \left( {\frac{1}{\beta },1} \right) \Rightarrow t\beta > 1 \Rightarrow h_{J_s } (t\beta ) < 0$
\item for $t \in \left( {\frac{1}{\beta },1} \right) \Rightarrow h_{J_s } (t) > 0$
\end{itemize}

\noindent
Finally, for $t \in \left( {\frac{1}{\beta },1} \right)$, $h_{J_s } (t) > 0$ e $h_{J_s } (t\beta ) < 0$. Thus we observe that the sign of $h_{J_s } (t) + h_{J_s } (\beta t)$ may change in $\left( {\textstyle{1 \over \beta },1} \right)$. Now, we shall show that this happens only once.

\bigskip \noindent
Since ${h}'_{J_s } $ is monotonically decreasing this implies that  ${h}'_{J_s } < 0$ and we know that the function $h_{J_s } $changes the sign only once. This gives
\[
\frac{d}{dt}\left( {h_{J_s } (t) + h_{J_s } (t\beta )} \right) = {h}'_{J_s }
(t) + {h}'_{J_s } (t\beta ) < 0.
\]

\noindent
Repeating the same process by substituting $t: = \frac{q}{r}$ and $\frac{p}{r} = \beta t$ with $\beta = \frac{p}{q}$ we conclude that the function ${T}'_{pq} $ also changes the sign at $r = q$.

\bigskip \noindent
(ii) For $s = 0,1$, the result follows by the continuity of the function $\xi _s (p,q)$ with respect to $s$.
\end{proof}

\section{Asymptotic Approximation}

In this section we shall bring asymptotic approximation of the measures given by (\ref{eq1}) and (\ref{eq2}). For this, first we shall  give a general result for Csisz\'{a}r's $f$-divergence then the other cases become as particular.

\bigskip
Given a function$f:(0,\infty ) \to {\rm R}$, the \textit{f-divergence} measure introduced by Csisz\'{a}r's \cite{csi} is given by
\begin{equation}
\label{eq10}
C_f (P\vert \vert Q) =
\sum\limits_{i = 1}^n {q_i f\left( {\frac{p_i }{q_i }} \right)} ,
\end{equation}

\noindent
for all $P,Q \in \Gamma _n $.

\bigskip
The following result is well known in the literature \cite{csi}.

\begin{result}  If the function $f$ is convex and normalized, i.e., $f(1) = 0$, then the \textit{f-divergence}, $C_f (P\vert \vert Q)$ is \textit{nonnegative} and \textit{convex} in the pair of probability distribution $(P,Q) \in \Gamma _n \times \Gamma _n $.
\end{result}

Based on Result 3.1, we can prove some properties of the measures (\ref{eq1}) and (\ref{eq2}).
\begin{theorem}
If $f$ is twice differentiable at $x = 1$and ${f}''(1) > 0$. Also $f(1) = 0$, then
\begin{equation}
\label{eq11}
C_f (P\vert \vert Q) \approx \frac{{f}''(1)}{2}\chi ^2(P\vert \vert Q) .
\end{equation}

\noindent
Equivalently,
\[
\frac{C_f (P\vert \vert Q)}{\chi ^2(P\vert \vert Q)} \to
\frac{{f}''(1)}{2}\,\,\ \mbox{as}  \,\,P \to Q.
\]
\end{theorem}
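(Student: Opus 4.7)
The plan is to Taylor-expand $f$ about $x=1$ and then sum termwise. Since $f$ is twice differentiable at $1$ with $f(1)=0$, we can write
\[
f(x) = f'(1)(x-1) + \frac{f''(1)}{2}(x-1)^2 + \varepsilon(x)(x-1)^2,
\qquad \varepsilon(x)\to 0 \text{ as } x\to 1.
\]
Substituting $x=p_i/q_i$ and multiplying by $q_i$ yields
\[
q_i\, f\!\left(\frac{p_i}{q_i}\right) = f'(1)(p_i-q_i) + \frac{f''(1)}{2}\,\frac{(p_i-q_i)^2}{q_i} + \varepsilon\!\left(\frac{p_i}{q_i}\right)\frac{(p_i-q_i)^2}{q_i}.
\]

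Next, I would sum this identity over $i=1,\dots,n$. The first-order piece vanishes because $P,Q\in\Gamma_n$ implies $\sum_i(p_i-q_i)=0$. The quadratic piece gives exactly $\frac{f''(1)}{2}\chi^2(P\|Q)$, so one obtains
\[
C_f(P\|Q) = \frac{f''(1)}{2}\chi^2(P\|Q) + \sum_{i=1}^n \varepsilon\!\left(\frac{p_i}{q_i}\right)\frac{(p_i-q_i)^2}{q_i}.
\]

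To get the stated limit, I would divide both sides by $\chi^2(P\|Q)$ (assuming $P\neq Q$, so this is nonzero) and bound the error term. Since $|\varepsilon(p_i/q_i)|\le \max_i|\varepsilon(p_i/q_i)|$ uniformly and the remaining sum equals $\chi^2(P\|Q)$, the error is controlled by $\max_i|\varepsilon(p_i/q_i)|$. As $P\to Q$ every ratio $p_i/q_i\to 1$, hence $\max_i|\varepsilon(p_i/q_i)|\to 0$, giving
\[
\frac{C_f(P\|Q)}{\chi^2(P\|Q)} \longrightarrow \frac{f''(1)}{2}.
\]

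The main (mild) obstacle is making the error term precise: the hypothesis only asserts twice differentiability \emph{at} $x=1$, not on a neighbourhood, so one cannot invoke a Lagrange remainder and must use the Peano form $\varepsilon(x)\to 0$ as above. The convexity/nonnegativity consequence from Result 3.1 (via $f''(1)>0$, at least infinitesimally) is reassuring but not strictly needed for the asymptotic; it only ensures both sides of \eqref{eq11} are positive so the ratio makes sense.
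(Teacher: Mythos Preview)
Your proof is correct and follows essentially the same approach as the paper: a second-order Taylor (Peano-remainder) expansion of $f$ at $1$, substitution of $x=p_i/q_i$, multiplication by $q_i$, and summation. Your write-up is in fact more careful than the paper's, since you make explicit both why the linear term $\sum_i f'(1)(p_i-q_i)$ vanishes and how the error term is controlled by $\max_i|\varepsilon(p_i/q_i)|$; the paper simply says ``Approximating $p_i\to q_i$ and summing over all $i$ we get the required result.''
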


\begin{proof} From Taylor's series expansion, we have
\[
f(x) = {f}'(1)(x - 1) + \frac{{f}''(1)}{2}(x - 1)^2 + k(x)(x - 1)^2,
\]

\noindent
where $f(1) = 0$ and $k(x) \to 0$ as $x \to 1$. Hence
\[
q_i f\left( {\frac{p_i }{q_i }} \right) = {f}'(1)(p_i - q_i ) +
\frac{{f}''(1)}{2}\frac{(p_i - q_i )^2}{q_i }
 + \,k\left( {\frac{p_i }{q_i }} \right)\frac{(p_i - q_i )^2}{q_i }.
\]

Approximating $p_i \to q_i $ and summing over all $i = 1,2,....n$ we get the required result.
\end{proof}

\begin{proposition} The following results hold:
\begin{itemize}
\item[(i)] $\zeta _s (P\vert \vert Q) \approx \frac{1}{8}\chi ^2(P\vert \vert Q)$,
$\forall s \in {\rm R}$.
\item[(ii)] $\xi _s (P\vert \vert Q) \approx \chi ^2(P\vert \vert Q)$, $\forall s
\in {\rm R}$.
\end{itemize}
\end{proposition}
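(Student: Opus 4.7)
The plan is to realize each measure as a Csisz\'ar $f$-divergence and then read off the asymptotic coefficient from Theorem 3.1. Factoring $q_i$ out of the pointwise definitions (4) and (7), we have
\[
\zeta_s(P\vert\vert Q) = \sum_{i=1}^n q_i\, f_{L_s}\!\left(\frac{p_i}{q_i}\right), \qquad \xi_s(P\vert\vert Q) = \sum_{i=1}^n q_i\, f_{J_s}\!\left(\frac{p_i}{q_i}\right),
\]
with generators (for $s \ne 0,1$)
\[
f_{L_s}(x) = \frac{1}{s(s-1)}\left[\frac{x^s+1}{2}\left(\frac{x+1}{2}\right)^{1-s} - \frac{x+1}{2}\right], \qquad f_{J_s}(x) = \frac{1}{s(s-1)}\bigl[x^s + x^{1-s} - (x+1)\bigr].
\]
Both satisfy $f(1) = 0$ by inspection, so everything reduces to computing $f''(1)$ and invoking Theorem 3.1.

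For part (ii) the generator $f_{J_s}$ is a pure combination of powers of $x$, so direct differentiation together with cancellation of the $s(s-1)$ prefactor produces $f_{J_s}''(x) = x^{s-2} + x^{-s-1}$, hence $f_{J_s}''(1) = 2$. Theorem 3.1 then gives $\xi_s(P\vert\vert Q) \approx \chi^2(P\vert\vert Q)$.

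For part (i) the main step is $f_{L_s}''(1)$. I would apply the Leibniz rule to the product $A(x)B(x)$ with $A(x) = (x^s+1)/2$ and $B(x) = ((x+1)/2)^{1-s}$, using $A(1) = B(1) = 1$, $A'(1) = s/2$, $B'(1) = (1-s)/2$, $A''(1) = s(s-1)/2$, $B''(1) = s(s-1)/4$. The cross term $2A'(1)B'(1) = -s(s-1)/2$ exactly cancels $A''(1)B(1) = s(s-1)/2$, leaving $(AB)''(1) = A(1)B''(1) = s(s-1)/4$; the subtracted linear piece $(x+1)/2$ contributes nothing at second order. Dividing by the prefactor $s(s-1)$ yields $f_{L_s}''(1) = 1/4$, independently of $s$, and Theorem 3.1 delivers $\zeta_s(P\vert\vert Q) \approx \tfrac{1}{8}\chi^2(P\vert\vert Q)$.

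The boundary values $s = 0, 1$ are handled by continuity of $\zeta_s$ and $\xi_s$ in $s$: the logarithmic definitions $I$, $T$, $J$ appearing in (1) and (2) are precisely the L'H\^opital limits of the generic expressions, so the generators $f_{L_s}$ and $f_{J_s}$, together with their values $f''(1)$, extend continuously across $s \in \{0,1\}$, propagating the same asymptotic constants. The principal obstacle is algebraic bookkeeping: the precise cancellation pattern in $f_{L_s}''(1)$ must be tracked carefully, and the fact that the answer is independent of $s$ is what guarantees a uniform asymptotic constant $1/8$ across the one-parameter family.
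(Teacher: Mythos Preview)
Your proposal is correct and follows essentially the same route as the paper: write each measure as a Csisz\'ar $f$-divergence, check $f(1)=0$, compute $f''(1)$ (the paper obtains the closed forms $\psi_s''(x)=\tfrac{x^{-s-1}+1}{8}\bigl(\tfrac{x+1}{2}\bigr)^{s-2}$ and $\phi_s''(x)=x^{s-2}+x^{-s-1}$ and then evaluates at $x=1$, whereas you shortcut directly to $f''(1)$ via Leibniz), and invoke Theorem~3.1. The only other difference is that the paper lists the generators at $s=0,1$ explicitly and notes that the second-derivative formulas remain valid there, while you argue by continuity in $s$; both are fine.
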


\begin{proof} (i) For all $x > 0$ and $s \in ( - \infty ,\infty )$, let us consider
\begin{equation}
\label{eq12}
\psi _s (x) = \begin{cases}
 {\left[ {s(s - 1)} \right]^{ - 1}\left[ {\left( {\frac{x^{1 - s} + 1}{2}}
\right)\left( {\frac{x + 1}{2}} \right)^s - \left( {\frac{x + 1}{2}}
\right)} \right],} & {s \ne 0,1} \\
 {\frac{x}{2}\ln x - \left( {\frac{x + 1}{2}} \right)\ln \left( {\frac{x +
1}{2}} \right),} & {s = 0} \\
 {\left( {\frac{x + 1}{2}} \right)\ln \left( {\frac{x + 1}{2\sqrt x }}
\right),} & {s = 1} \\
\end{cases},
\end{equation}

\noindent
in (\ref{eq10}), then we have $C_f (P\vert \vert Q) = \zeta _s (P\vert \vert Q)$, where $\zeta _s (P\vert \vert Q)$ is as given by (\ref{eq1}).

\bigskip \noindent
We have
\[
\psi _s ^\prime (x) = \begin{cases}
 {(s - 1)^{ - 1}\left[ {\frac{1}{s}\left[ {\left( {\frac{x + 1}{2x}}
\right)^s - 1} \right] - \frac{x^{ - s} - 1}{4}\left( {\frac{x + 1}{2}}
\right)^{s - 1}} \right],} & {s \ne 0,1} \\
 { - \frac{1}{2}\ln \left( {\frac{x + 1}{2x}} \right),} & {s = 0} \\
 {1 - x^{ - 1} - \ln x - 2\ln \left( {\frac{2}{x + 1}} \right),} & {s = 1}
\\
\end{cases}
\]

\noindent and
\begin{equation}
\label{eq13}
\psi _s ^{\prime \prime }(x) = \left( {\frac{x^{ - s - 1} + 1}{8}}
\right)\left( {\frac{x + 1}{2}} \right)^{s - 2}.
\end{equation}

\noindent
This gives
\begin{equation}
\label{eq14}
\psi _s ^{\prime \prime }(1) = \frac{1}{4}.
\end{equation}

\noindent
Expression (\ref{eq11}) together with (\ref{eq13}) and (\ref{eq14}) completes the proof of part (i).

\bigskip \noindent
In particular, when $s = 0$ in (\ref{eq12}) the result is obtained in Endres and Schindelin \cite{ens}.

\bigskip
(ii) For all $x > 0$ and $s \in ( - \infty ,\infty )$, let us consider
\begin{equation}
\label{eq15}
\phi _s (x) = \begin{cases}
 {\left[ {s(s - 1)} \right]^{ - 1}\left[ {x^s + x^{1 - s} - (1 + x)}
\right],} & {s \ne 0,1} \\
 {(x - 1)\ln x,} & {s = 0,1} \\
\end{cases},
\end{equation}

\noindent
in (\ref{eq10}), then we have $C_f (P\vert \vert Q)=\xi _s \left( {P\vert \vert Q} \right)$, where $\xi _s \left( {P\vert \vert Q} \right)$ is given by (\ref{eq2}).

\bigskip \noindent
We have
\[
\phi _s ^\prime (x) = \begin{cases}
 {\left[ {s(s - 1)} \right]^{ - 1}\left[ {s(x^{s - 1} + x^{ - s}) + x^{ - s}
- 1} \right],} & {s \ne 0,1} \\
 {1 - x^{ - 1} + \ln x,} & {s = 0,1} \\
\end{cases},
\]

\noindent and
\begin{equation}
\label{eq16}
\phi _s ^{\prime \prime }(x) = x^{s - 2} + x^{ - s - 1}.
\end{equation}

\noindent
This gives
\begin{equation}
\label{eq17}
\phi _s ^{\prime \prime }(1) = 2.
\end{equation}

\noindent
Expression (\ref{eq17}) together with (\ref{eq10}) and (\ref{eq11}) completes the proof of part (ii).
\end{proof}


\begin{thebibliography}{99}
\setlength{\itemsep}{5pt}

\bibitem{bur} J. BURBEA and C.R. RAO, On the Convexity of Some Divergence Measures Based on Entropy Functions, \textit{IEEE Trans. on Inform. Theory}, \textbf{IT-28}(1982), 489-495.

\bibitem{csi} I. CSISZ\'{A}R, Information Type Measures of Differences of Probability Distribution and Indirect Observations, \textit{Studia Math. Hungarica}, \textbf{2}(1967), 299-318.

\bibitem{ens}  D.M. ENDRES and J.E. SCHINDELIN, A New Metric For Probablity Distrbutions, \textit{IEEE Trans. on Information Theory},  \textbf{49}(7), JULY, 2003, 1858-1860.

\bibitem{jef} H. JEFFREYS, An Invariant Form for the Prior Probability in Estimation Problems, \textit{Proc. Roy. Soc. Lon.}, \textbf{Ser. A}, \textbf{186}(1946), 453-461.

\bibitem{sib} R. SIBSON, Information Radius, \textit{Z. Wahrs. und verw Geb.,} \textbf{14}(1969), 149-160.

\bibitem{tan2} I.J. TANEJA, New Developments in Generalized Information Measures, Chapter in: \textit{Advances in Imaging and Electron Physics}, Ed. P.W. Hawkes, \textbf{91}(1995), 37-136.

\bibitem{tan3} I.J. TANEJA , On Symmetric and Nonsymmeric Divergence Measures and Their Generalizations, Chapter in: \textit{Advances in Imaging and Electron Physics}, \textbf{138}(2005), 177-250.

\bibitem{tan4}  I.J. TANEJA, Refinement Inequalities Among Symmetric Divergence Measures, \textit{The Australian Journal of Mathematical Analysis and Applications}, \textbf{2}(1)(2005), Art. 8, pp. 1-23.

\bibitem{tan5} I.J. TANEJA,  A Sequence of Inequalities among Difference of Symmetric Divergence Measures, to appear in \textit{Journal of Informatics and Mathematical Sciences}, 2011.

\end{thebibliography}
\end{document}